\theoremstyle{thmstyleone}%
\newtheorem{theorem}{Theorem}%  meant for continuous numbers
\newtheorem{lemma}[theorem]{Lemma}
\theoremstyle{thmstyletwo}%
\newtheorem{remark}{Remark}%
\theoremstyle{thmstylethree}%
\newtheorem{definition}{Definition}%
\begin{document}

\title[Sackin and Colless indices]{Two Results about the Sackin  and Colless Indices for Phylogenetic Trees and Their Shapes}
\author[1]{\fnm{Gary} \sur{Goh}}\email{e0148664@u.nus.edu}

\author[2]{\fnm{Michael} \sur{Fuchs}}\email{mfuchs@nccu.edu.tw}
%\equalcont{These authors contributed equally to this work.}

\author*[1]{\fnm{Louxin} \sur{Zhang}}\email{matzlx@nus.edu.sg}
%\equalcont{These authors contributed equally to this work.}

\affil*[1]{\orgdiv{Department of Mathematics}, \orgname{National University of Singapore}, \orgaddress{\street{10 Lower Kent Ridge Road}, \city{Singapore}, \postcode{119076}, %\state{State},
\country{Singapore}}}
\affil[2]{\orgdiv{Department of Mathematical Sciences}, \orgname{National Chengchi University}, \orgaddress{%\street{Street},
\city{Taipei}, \postcode{116}, %\state{State},
\country{Taiwan}}}

%\author{Gary  Goh$^a$, Michael Fuchs$^b$, Louxin Zhang$^{a}$\thanks{Corresponding author at: Department of Mathematics, 
%National University of Singapore, 10 Lower Kent Ridge Road, Singapore 119076. Tel: (65) 65166579. Email: matzlx@nus.edu.sg}\\
%$^a$ Department of Mathematics, \\
%National University of Singapore, Singapore, 119764\\
%$^b$ Department of Mathematical Sciences,\\ %National Chengchi University, Taipei, 116, %Taiwan}
%\date{}

%\begin{document}

%\maketitle

%\newpage 
\abstract{The Sackin and Colless  indices are two widely-used metrics for measuring the balance of trees and for testing evolutionary models in phylogenetics. This short paper contributes two results about the Sackin and Colless indices of trees.
One result is the asymptotic analysis of the expected Sackin and Colless indices of a tree shape (which are full binary rooted unlabelled trees) under the uniform model where tree shapes are sampled with equal probability. Another is   a short elementary proof of the closed formula for the expected Sackin index of phylogenetic trees (which are full binary rooted trees with leaves being labelled with taxa) under the uniform model.}

\keywords{Phylogenetics, tree balance, Sackin index, Colless index, asymptotic analysis}

\pacs[MSC Classification]{05A16, 05C30, 92D15}

\maketitle

\section{Introduction}
The Sackin \cite{Sackin72Syst,Shao90Syst} and Colless  \cite{Colless82Syst} indices are two widely-used metrics for measuring the balance of phylogenetic trees and
testing evolutionary models \cite{Avino19EE,Xue20PNAS, Mooers97Quarterly,Scott20Syst,Blum06PLOS,Kirkpatrick93Evol}. 
Phylogenetic trees are binary
rooted trees in which each internal node has two children and only the leaves are labelled one-to-one with taxa.
For a phylogenetic tree, its Sackin index is defined as the sum over its internal nodes of the number of leaves below that node, whereas its Colless index is defined as the sum over its internal nodes of the  balance of that node, where the balance of a node is defined to be the difference in the number of leaves below the two  children of that node. Because of their wide applications,  the two tree balance metrics have been extensively studied in the past decades (see the recent comprehensive survey \cite{Fisher21Survey}).

The Sackin and Colless indices of a random phylogenetic tree have been investigated under the Yule-Harding  model (where tree shapes of $n$ leaves  are generated using a birth-death process and their leaves are labeled according to a permutation of taxa chosen uniformly at random) and the uniform model (where trees are sampled with equal probability) \cite{Kirkpatrick93Evol,Heard92Evol,Blum05Math,Blum06AAP}.
The expected Sackin and Colless indices of a phylogenetic tree are proved to be asymptotic to  $\sqrt{\pi}n^{3/2}$ under the uniform model and $n\log n$ under the Yule-Harding model \cite{Blum05Math,Blum06AAP}. Recently, Mir et al. 
\cite{Mir13Math} discovered surprisingly that the expected Sackin index of a phylogenetic tree is simply $\frac{4^{n-1}n!(n-1)!}{(2n-2)!}-n$ under the uniform model. An alternative proof of this closed formula was given by King and Rosenberg \cite{King21Math}. Both asymptotic and exact results on the variances of the Sackin and Colless indices have also been reported \cite{Kirkpatrick93Evol,Blum05Math,Blum06AAP,Coronado20BMC}.

It is not hard to see that the Sackin index of a binary tree is actually  equal to the sum of the depths of all its leaves \cite{Steel16Book}. Therefore, 
the Sackin index and the tree height have also been studied for other types of trees in the combinatorics and theoretical computer science literature \cite{Flajolet82JCSS,Broutin12RSA, Fill04TCS,Fuchs15JMB}. 

In this paper, we focus on two questions about the Sackin and Colless indices.
The first question is what the expected Sackin and Colless indices of a random binary  tree shape are under the uniform model \cite{Rogers96Syst}. Here, tree shapes (also called Otter or Polya trees) are binary rooted trees with unlabeled leaves where each internal node has two children. Although there is increasing interest in tree balance indices for tree shapes in the study of  phylodynamic problems \cite{colijn2018metric,kim2020distance}, to the best of our knowledge, the statistical properties of these two indices and other tree balance indices have not been formally studied for tree shapes \cite{Fisher21Survey}. 
  Here,   we prove that 
the  expected Sackin and Colless indices of a  tree shape with $n$ leaves are asymptotic to $\sqrt{\pi}\lambda^{-1}n^{3/2}$ under the uniform model, where  $\lambda \approx 1.1300337163$.
%{\color{blue} Note that this order is expected since tree shapes under the uniform model are known to behave similar to phylogenetic trees under the uniform model; see the discussion in the introduction of \cite{Broutin12RSA}.}

Given that the closed formula (mentioned above) for the 
expected Sackin index of a phylogenetic tree under the uniform model is rather simple, the second question is whether an elementary proof exists for the formula or not.
We answer this question by using a simple recurrence for the Sackin index that is derived using the fact that all the phylogenetic trees on $n$ taxa can be enumerated by inserting the $n$-th taxon into every edge of the phylogenetic trees on $n-1$ taxa \cite{Fel04Book}. Recently, this technique was used by Zhang  for computing the sum over all nodes of the number of the descendants of that node and counting the number of tree-child networks with 
one reticulation \cite{Zhang19BMC}. 

\section{Basic definitions and notation}

\subsection{Phylogenetic trees and shapes}

A tree shape is a full binary rooted tree in which all nodes are unlabeled. 
A phylogenetic tree on $n$ taxa is 
a full binary rooted tree with $n$ leaves in which its leaves are uniquely labeled with a taxon and each of  the $n-1$ non-leaf nodes has two children.

Let $T$ be a phylogenetic tree on $n$ taxa or a tree shape. We use $V_0(T)$ to denote the set of all non-leaf nodes of $T$ and $V(T)$ to denote the set of all nodes. 
A leaf $x$  is said to be below a node $u$ in $T$ if the unique path from the root to $x$ passes through $u$. We use $\ell_T(u)$ to denote  the number of leaves below $u$ in $T$. Also, we set $\ell_T(u)=1$ if $u$ is a leaf.

Let $u \in V_0(T)$.  The {\it balance} of $u$ is defined to be 
$\lvert \ell_T(v)-\ell_T(w) \rvert$,
where $v$ and $w$ are the two children of $u$.  We use $\delta _T(u)$ to denote the balance of $u$.

For each non-root $u\in V(T)$, we use $p(u)$ to denote the parent of $u$ in $T$.

\subsection{Sackin and Colless indices}
\begin{definition}
The Sackin index of a tree shape or a phylogenetic tree $T$ is defined to be $\sum_{u\in V_0(T)} \ell_T(u)$,  and denoted by $S(T)$.
\end{definition}

\begin{definition}
The Colless index of a tree shape  or a phylogenetic tree $T$ is defined to be $\sum_{u\in V_0(T)}\delta_T(u)$,  and denoted by 
$C(T)$.
%where $b(v)$ is the difference of the numbers of leaves in the left and right subtrees below $v$. 
\end{definition}

The expected Sackin and Colless indices of a tree shape under the uniform model are respectively defined as: 
\[
\mbox{ESI}_{sh}(n)=\frac{1}{b_n}\sum_{T\in {\cal T}(n)}\mbox{S}(T)
\]
and 
\[
\mbox{ECI}_{sh}(n)=\frac{1}{b_n}\sum_{T\in {\cal T}(n)}C(T),
\]
where ${\cal T}(n)$ denotes the set of all tree shapes with $n$ leaves and $b_n=\vert{\cal T}(n)\vert$. 
Although there does not exist a closed formula for $b_n$,  $b_n$ can  be computed using the following recurrence formulas for $n>1$ (A001190 in the  On-Line Encyclopedia of Integer Sequences\footnote{\url{https://oeis.org/}}):
\begin{equation}\label{eqn3}
 % b_n&=&\sum_{1\leq k <n/2} b_kb_{n-k}, \;\;\; \mbox{if $n$ is odd;}\\
   b_n=\sum_{1\leq k <n/2} b_kb_{n-k} + \begin{cases}
    0, & \mbox{if $n$ is odd;}\\
    {\displaystyle\frac{1}{2}b_{n/2}(b_{n/2}+1)}, &  \mbox{if $n$ is even.}\end{cases}
\end{equation}
Equivalently, the generating function 
$B(z)=\sum_i b_iz^i$ satisfies the following equation: 
\begin{equation}
  B(z)=z+\frac{1}{2}\left(B(z)^2+B(z^2)\right). \label{eqn_GF}
\end{equation}

The expected Sackin index of a phylogenetic tree under the uniform model is defined similarly, that is, 
\[
\mbox{ESI}_{p}(n)=\frac{1}{a_n}\sum_{P\in {\cal P}(n)}S(P),
\]
where ${\cal P}(n)$ denotes the set of all phylogenetic trees on $n$ taxa and $a_n=\vert{\cal P}(n)\vert=\frac{(2n-2)!}{2^{n-1}(n-1)!}$ (see \cite{Steel16Book}). 

%\begin{theorem}
%\label{Thm1}
%(\cite{Mir13Math}) For any $n$, $\mbox{ESI}_{p}(n)=\frac{4^{n-1}n!(n-1)!}{(2n-2)!}-n$.
%\end{theorem}
%
%
%
%
%
%
\section{Asymptotic analysis of the expected Sackin and Colless indices for tree shapes}

Recall that  ${\cal T}(n)$ denotes the set of all possible tree shapes with $n$ leaves. Let $S_n=\sum_{T\in {\cal T}(n)} S(T)$, which is the sum of the Sackin index over all tree shapes with $n$ leaves. Obviously, 
$S_1=0$ and $S_2=2$. 

For $n>2$, ${\cal T}(n)$ can be obtained by combining every pair of tree shapes 
$T'\in {\cal T}(k)$ and $T''\in {\cal T}(n-k)$, where
$k$ can range from $1$ to $n/2$. 
For a specific $k\leq n/2$,  $T\in {\cal T}(k)$ and $T'\in {\cal T}(n-k)$,
$S(T)=n+S(T')+S(T'')$ for the tree shape 
$T$ obtained by combining $T'$ and $T''$, as there are $n$ leaves below the root of $T$.

Using the facts mentioned above and Eqn. (\ref{eqn3}), we obtain that:
\begin{align}
S_n &=  \sum_{1\leq k <  n/2} \left( \sum_{T\in {\cal T}(k)}\sum_{T'\in {\cal T}(n-k)} \left(n+S(T)+S(T')\right)\right) \nonumber \\
&= \sum_{1\leq k <  n/2} \left( nb_kb_{n-k}+ \sum_{T\in {\cal T}(k)}\sum_{T'\in {\cal T}(n-k)} \left(S(T)+S(T')\right)\right) \nonumber \\
&=n \sum_{1\leq k <  n/2} b_kb_{n-k} \nonumber \\
& \;\;\;+ \sum_{1\leq k <  n/2} \left( \sum_{T\in {\cal T}(k)}\sum_{T'\in {\cal T}(n-k)} S(T) + \sum_{T\in {\cal T}(k)}\sum_{T'\in {\cal T}(n-k)} S(T')\right)
\nonumber \\
&= n b_n + \sum_{1\leq k< n/2}\left( b_{n-k}S_k + b_k S_{n-k}\right)  \nonumber \\
&=n b_n
+ \sum_{1\leq k <n}  S_kb_{n-k}, 
 \label{eqn6}
\end{align}
for odd $n$ and 
\begin{align}
S_n 
&=nb_n + \sum_{1\leq k <  n/2} \left( \sum_{T\in {\cal T}(k)}\sum_{T'\in {\cal T}(n-k)} \left(S(T)+S(T')\right) \right)  \nonumber \\
&\quad + 
\sum_{T, T'\in {\cal T}(n/2): T\neq T'} \left(S(T)+S(T')\right) +\sum_{T\in {\cal T}(n/2)} 2S(T) \nonumber  \\
&= n b_{n}  + \sum_{1\leq k< n/2}\left(b_{n-k} S_k + b_{k} S_{n-k}\right)   + \left(\sum_{T\in {\cal T}(n/2)}(b_{n/2}-1)S(T)\right)  + 2S_{n/2}  \nonumber \\
&= nb_{n}
+ S_{n/2} + \sum_{1\leq  k <n} S_kb_{n-k}
 \label{eqn7}
\end{align}
for even $n$.

\subsection{The asymptotic value of $\mbox{ESI}_{sh}(n)$}

It is unknown whether or not one can derive a closed formula for $S_n$  from Eqn. (\ref{eqn6})-(\ref{eqn7}).  However, 
 an asymptotic analysis of $S_n$  follows from the classical asymptotic analysis of $b_n$ from Eqn. (\ref{eqn3}). In order to recall the latter, we need the notion of $\Delta$-analyticity. First, a $\Delta$-domain with parameters $\delta$ and $\phi$  is a domain in the complex plane of the form:
\[
\Delta=\{z\in{\mathbb C}\ :\ \vert z\vert<1+\delta,\ \vert\arg(z-1)\vert>\phi\}
\]
with $\delta>0$ and $0<\phi<\pi/2$; see Definition VI.1 in \cite{Flajolet09Book}. A function is called $\Delta$-analytic if it is analytic in such a $\Delta$-domain. 

\begin{lemma} (\cite{Broutin12RSA})
The convergence radius $\rho$ of the generating function $B(z)$ of $b_n$ in
Eqn. (\ref{eqn_GF}) satisfies $1/4\leq \rho\leq1/2$, where  $\rho+B(\rho^2)/2=1/2$. Moreover, $B(z)$ is $\Delta$-analytic and satisfies as $z\rightarrow\rho$ in {\color{blue} a} $\Delta$-domain:
\begin{equation}\label{sing-exp-Bz}
B(z)=1-\lambda\sqrt{1-z/\rho}+\mathcal{O}(1-z/\rho),\qquad\lambda=\sqrt{2\rho+2\rho^2B'(\rho^2)}.
\end{equation}
Thus,
\begin{equation}\label{exp-Bn}
b_n\sim\frac{\lambda}{2\sqrt{\pi}n^{3/2}\rho^{n}},\qquad (n\rightarrow\infty).
\end{equation}
\end{lemma}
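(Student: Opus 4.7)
The plan is to solve the functional equation (\ref{eqn_GF}) as a quadratic in $B(z)$ and then feed the result into the singularity analysis machinery of \cite{Flajolet09Book}. Rewriting (\ref{eqn_GF}) as $B(z)^2 - 2B(z) + 2z + B(z^2) = 0$ and picking the branch compatible with $B(0)=0$ yields the closed form $B(z) = 1 - \sqrt{1-2z-B(z^2)}$. For the a priori bounds on $\rho$, first note that every tree shape admits at least one plane embedding, so $b_n \le C_{n-1}$ (the $(n-1)$-st Catalan number) and hence $\rho \ge 1/4$. Conversely, at $z = 1/2$ the radicand equals $-B(1/4) < 0$, so $B$ cannot remain real-analytic past $z = 1/2$, giving $\rho \le 1/2$. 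Because $\rho \le 1/2$ forces $\rho^2 \le 1/4 \le \rho$, the inner function $B(z^2)$ is analytic in an open neighbourhood of $\rho$, so the singularity of $B$ at $\rho$ is carried entirely by the outer square root; this already imposes the defining relation $1 - 2\rho - B(\rho^2) = 0$, i.e., $\rho + B(\rho^2)/2 = 1/2$.

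Next I would Taylor-expand the radicand at $z = \rho$ to obtain
\[
1 - 2z - B(z^2) = \bigl(2 + 2\rho B'(\rho^2)\bigr)(\rho - z) + \mathcal{O}\bigl((\rho-z)^2\bigr).
\]
Using $\rho - z = \rho(1 - z/\rho)$ and extracting the outer square root immediately yields (\ref{sing-exp-Bz}) with $\lambda = \sqrt{2\rho + 2\rho^2 B'(\rho^2)}$. Note that $\lambda$ is a well-defined positive real, since $B'$ has non-negative Taylor coefficients and $\rho > 0$.

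What remains is to upgrade this local statement at $\rho$ to genuine $\Delta$-analyticity and then invoke transfer. $\Delta$-analyticity requires (a) that $\rho$ is the only singularity of $B$ on the circle $|z| = \rho$ and (b) that $B$ continues analytically across $|z| = \rho$ off the slit $[\rho, \rho+\delta)$. Point (b) is automatic once (a) is known, because $B(z^2)$ extends analytically throughout $|z| < 1$ (as $|z^2| < |z|$ keeps us inside the disc of convergence) and the outer square root continues along any path avoiding $[\rho,\infty)$. Point (a) is the main obstacle; it follows from aperiodicity of the sequence ($b_1 = b_2 = 1$ gives $\gcd\{n : b_n > 0\} = 1$) together with the positivity of the coefficients and the contractive structure of (\ref{eqn_GF}), via the standard argument of Section IV.6 of \cite{Flajolet09Book}. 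With $\Delta$-analyticity in hand, the transfer theorem (Corollary VI.1 of \cite{Flajolet09Book}) applied to (\ref{sing-exp-Bz}) gives $[z^n] B(z) \sim -\lambda\,[z^n](1-z/\rho)^{1/2} \sim \lambda\,\rho^{-n}/(2\sqrt{\pi}\,n^{3/2})$, using $1/\Gamma(-1/2) = -1/(2\sqrt{\pi})$, and this is exactly (\ref{exp-Bn}).
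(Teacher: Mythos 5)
The paper itself gives no proof of this lemma---it is quoted from \cite{Broutin12RSA}---and your reconstruction is in outline exactly the classical Otter-type argument used there: solve the quadratic to get $B(z)=1-\sqrt{R(z)}$ with $R(z)=1-2z-B(z^2)$, show the dominant singularity is where the radicand vanishes, expand $R$ linearly at $\rho$, and transfer. Your bounds $1/4\le\rho\le 1/2$, the expansion $R(z)=(2+2\rho B'(\rho^2))(\rho-z)+\mathcal{O}((\rho-z)^2)$, the identification of $\lambda$, and the final transfer computation with $1/\Gamma(-1/2)=-1/(2\sqrt{\pi})$ are all correct. Two places need repair, though. Minor one first: your claim that $B(z^2)$ ``extends analytically throughout $|z|<1$ since $|z^2|<|z|$'' is wrong as reasoned---$|z^2|<|z|$ does not place $z^2$ inside the disc of convergence unless $|z|<\sqrt{\rho}$. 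What is true, and all you need, is that $B(z^2)$ is analytic for $|z|<\sqrt{\rho}$, and $\sqrt{\rho}>\rho$ because $\rho\le 1/2$, so $R$ is analytic on a disc of radius $\rho+\delta$. Also, your derivation of $2\rho+B(\rho^2)=1$ tacitly uses Pringsheim's theorem (positive coefficients make $\rho$ itself a singularity) together with $R\ge 0$ on $[0,\rho)$; this should be said.

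The substantive gap is the step you delegate to ``the standard argument of Section IV.6'': uniqueness of the singularity on $|z|=\rho$ and continuation to a $\Delta$-domain are the actual content of $\Delta$-analyticity, and aperiodicity of $(b_n)$ by itself does not finish the job for a functional equation of this type. The clean fix is a daffodil-type estimate applied to the radicand: if $R(z_0)=0$ for some $|z_0|\le\rho$, then $1=|2z_0+B(z_0^2)|\le 2|z_0|+B(|z_0|^2)\le 2\rho+B(\rho^2)=1$, and equality throughout, with positive coefficients at consecutive powers ($2z$ and $b_1z^2$), forces $z_0$ to be positive real and equal to $\rho$. Hence $R$ is nonvanishing on the slit disc $\{|z|<\rho+\delta\}\setminus[\rho,\rho+\delta)$ for small $\delta$ (near $\rho$ by the local expansion $R\sim\lambda^2(1-z/\rho)$, elsewhere by compactness); that region is simply connected, so it carries an analytic branch of $\sqrt{R}$ agreeing with the positive root on $[0,\rho)$, and $1-\sqrt{R(z)}$ continues $B$ to a $\Delta$-domain. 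With this inserted, your argument is a complete proof of (\ref{sing-exp-Bz}) and (\ref{exp-Bn}), matching the cited source rather than merely appealing to it.
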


\begin{remark} $\rho$ and $\lambda$ can be computed up to very high precision, e.g.,
\[
\rho=0.40269750367\cdots\qquad\text{and}\qquad\lambda= 1.1300337163\cdots.
\]
The computation is done as follows: first, use Eqn. (\ref{eqn3}) to compute a truncated version $\tilde{b}_n$ of $b_n$; then use it to compute a truncated version $\tilde{B}(z)$ of $B(z)$; finally, find $\tilde{\rho}$ with $\tilde{\rho}+\tilde{B}(\rho^2)/2=1/2$. Clearly, $\tilde{\rho}$ approximates $\rho$ and this approximation can be made arbitrarily precise; also an approximation of $\lambda$ can be derived from it via Eqn. (\ref{sing-exp-Bz}).
\end{remark}

\begin{remark}\label{trans-thm}
The asymptotic expansion in Eqn. (\ref{exp-Bn}) follows from the {\it singularity expansion} in Eqn. (\ref{sing-exp-Bz}) by the {\it transfer theorems} (see Theorem VI.3 and Corollary VI.1 in \cite{Flajolet09Book}) which assert that if $A(z)$ is $\Delta$-analytic with  $A(z)\sim c(1-z/\rho)^{-\alpha}$, where $c,\rho\in{\mathbb R}\setminus\{0\}$ and $\alpha\in{\mathbb C}\setminus\{0,-1,-2,\ldots\}$, then $[z^n]A(z)\sim [z^n]c(1-z/\rho)^{-\alpha}\sim c\rho^{-n}n^{\alpha-1}/\Gamma(\alpha)$, where $[z^n]f(z)$ denotes the $n$-th coefficient in the Maclaurin series of $f(z)$ and $\Gamma(z)$ is the gamma function. More generally, the process of showing that $A(z)$ is $\Delta$-analytic, deriving the expansion $A(z)\sim c(1-z/\rho)^{-\alpha}$ as $z\rightarrow\rho$ and then using the transfer theorems to obtain the asymptotics of $[z^n]A(z)$ is called {\it singularity analysis}; see Chapter VI in \cite{Flajolet09Book}.
\end{remark}

\begin{remark}\label{clos-prop}
 Singularity analysis is closed under several operations on functions; see Section VI.10 in \cite{Flajolet09Book}. For instance, if singularity analysis can be applied to $A(z)$, it can also be applied to $A'(z)$, where the singularity expansion of $A'(z)$ is obtained from the one of $A(z)$ by term-by-term differentiation. E.g., $B'(z)$ from the previous lemma is also $\Delta$-analytic with singularity expansion as $z\rightarrow\rho$
\begin{equation}\label{sing-exp-diff-B}
B'(z)\sim\frac{\lambda}{2\rho}\cdot\frac{1}{\sqrt{1-z/\rho}},
\end{equation}
from which the asymptotic expansion of $[z^n]B'(z)$ follows by the transfer theorems.  (Of course, since $[z^n]B'(z)=(n+1)[z^{n+1}]B(z)$, this expansion is just the expansion in Eqn. (\ref{exp-Bn}) multiplied by $n/\rho$.)
\end{remark}

%In the rest of this section, We will first derive the asymptotic value of 
%$ESI_{sh}(n)$ and then that of %$ECI_{sh}(n)$.

\begin{theorem}  
\label{res-Sackin}
Under the uniform model, the  
expected Sackin index of a tree shape with $n$ leaves,
$\mbox{\rm ESI}_{sh}(n)$, is  asymptotic to  $\pi^{1/2}\lambda^{-1}n^{3/2}$, where $\lambda$ is given in Eqn. (\ref{sing-exp-Bz}).
\end{theorem}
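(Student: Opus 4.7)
The plan is to introduce the Sackin generating function $T(z) = \sum_{n \ge 1} S_n z^n$, derive a functional equation from the recurrences (\ref{eqn6})--(\ref{eqn7}), apply singularity analysis in the sense of Remarks \ref{trans-thm}--\ref{clos-prop} to extract the asymptotics of $S_n$, and finally divide by the asymptotic (\ref{exp-Bn}) for $b_n$. Concretely, I would multiply both recurrences by $z^n$ and sum over $n \ge 2$. Using $\sum_n n b_n z^n = z B'(z)$, the fact that the even-$n$ corrections aggregate to $T(z^2)$, and that the convolution $\sum_{k=1}^{n-1} S_k b_{n-k}$ contributes $B(z) T(z)$ (since $S_0 = b_0 = 0$), this yields
\begin{equation*}
T(z)\bigl(1 - B(z)\bigr) = z B'(z) + T(z^2) - z,
\end{equation*}
and hence $T(z) = \bigl(z B'(z) + T(z^2) - z\bigr)/\bigl(1 - B(z)\bigr)$.

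Next I would analyse this expression near the dominant singularity $z = \rho$ of $B(z)$. By Eqn. (\ref{sing-exp-Bz}), $1 - B(z) \sim \lambda\sqrt{1 - z/\rho}$; by Eqn. (\ref{sing-exp-diff-B}), $z B'(z) \sim (\lambda/2)(1 - z/\rho)^{-1/2}$. Since $\rho < 1$, we have $\rho^2 < \rho$, so $T(z^2)$ remains analytic at $z = \rho$ and, together with the polynomial $-z$, contributes only $O(1)$ to the numerator. Combining these expansions produces the singular behaviour
\begin{equation*}
T(z) \sim \frac{(\lambda/2)(1-z/\rho)^{-1/2}}{\lambda(1-z/\rho)^{1/2}} = \frac{1}{2(1 - z/\rho)}, \qquad z \to \rho.
\end{equation*}
The transfer theorem (Remark \ref{trans-thm}) then yields $S_n \sim \tfrac{1}{2}\rho^{-n}$, and dividing by $b_n \sim \lambda/(2\sqrt{\pi}\,n^{3/2}\rho^n)$ gives $\mbox{ESI}_{sh}(n) \sim \sqrt{\pi}\lambda^{-1} n^{3/2}$, as claimed.

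The main technical obstacle will be verifying $\Delta$-analyticity of $T(z)$ on a suitable slit domain so that the transfer theorem may be invoked. The key step is ruling out zeros of $1 - B(z)$ on the circle $|z| = \rho$ apart from $z = \rho$ itself, which follows from the positivity and aperiodicity of the sequence $(b_n)$. The remaining ingredients are immediate: $T(z^2)$ is $\Delta$-analytic on a neighbourhood of $\rho$ because $|z^2| < \rho$ there, and $z B'(z)$ is $\Delta$-analytic by Remark \ref{clos-prop}. Once these are in place the quotient inherits $\Delta$-analyticity on a suitable $\Delta$-domain and the rest of the argument is routine singularity analysis.
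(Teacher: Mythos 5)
Your proposal follows essentially the same route as the paper's proof: translate the recurrences (\ref{eqn6})--(\ref{eqn7}) into a functional equation for the Sackin generating function, solve for it as $\bigl(zB'(z)+S(z^2)+\cdots\bigr)/\bigl(1-B(z)\bigr)$, expand near $z=\rho$ via Eqns. (\ref{sing-exp-Bz}) and (\ref{sing-exp-diff-B}), apply the transfer theorem to get $S_n\sim\tfrac{1}{2}\rho^{-n}$, and divide by Eqn. (\ref{exp-Bn}). Your extra $-z$ term (which corrects the $n=1$ coefficient and which the paper omits) and your explicit remark on ruling out other zeros of $1-B(z)$ on $\lvert z\rvert=\rho$ are harmless refinements of the same argument, so the proof is correct.
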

\begin{proof}
The recurrence formulas in Eqn. (\ref{eqn6})-(\ref{eqn7}) translate into the following equation for the generating function $S(z)=\sum_{i}S_iz^i$ of $S_n$:
\begin{equation}\label{eq-S}
S(z)=zB'(z)+S(z)B(z)+S(z^2)
\end{equation}
since the generating function of $\sum_{1\leq k<n}S_kb_{n-k}$ is the product $S(z)B(z)$ and
\[
\sum_{n\geq 1}nb_nz^n=zB'(z),\qquad \sum_{n\ \text{even}}S_{n/2}z^n=S(z^2).
\]
Rewriting Eqn. (\ref{eq-S}) into
\[
S(z)=\frac{zB'(z)+S(z^2)}{1-B(z)},
\]
 we deduce that the radius of convergence of $S(z)$ is equal to $\rho$. Moreover, from Eqn. (\ref{sing-exp-Bz}) and the closure properties of singularity analysis (Remark~\ref{clos-prop} above), we obtain that $S(z)$ is $\Delta$-analytic and satisfies as $z\rightarrow\rho$ in a $\Delta$-domain:
\[
S(z)\sim\frac{\rho(\lambda/2\rho)(1-z/\rho)^{-1/2}+S(\rho^2)}{\lambda\sqrt{1-z/\rho}+{\mathcal O}(1-z/\rho)}\sim\frac{1}{2}\cdot\frac{1}{1-z/\rho},
\]
where we used Eqn. (\ref{sing-exp-diff-B}) and $\rho<1$ which implies that $S(z^2)$ is analytic at $z=\rho$.

By the transfer theorems (see Remark~\ref{trans-thm}), we obtain: 
\begin{equation}\label{asymp-Sn}
S_n\sim\frac{1}{2}[z^n](1-z/\rho)^{-1} \sim\frac{1}{2\rho^{n}},\qquad (n\rightarrow\infty)
\end{equation}
and thus 
\[
\mbox{ESI}_{sh}(n)=\frac{S_n}{b_n} 
\sim
\frac{1/(2\rho^n)}{\lambda/\left(2\sqrt{\pi}n^{3/2}\rho^n\right)} = \sqrt{\pi}\lambda^{-1}n^{3/2},\qquad (n\rightarrow\infty)
\]
using Eqn. (\ref{exp-Bn}).
This proves the claim.
\end{proof}

\subsection{The asymptotic value of $\mbox{ECI}_{sh}(n)$}

Next, we derive the asymptotic value of $\mbox{ECI}_{sh}(n)$. First, for each internal node $u$ of a tree, we use  $c_1(u)$ and $c_2(u)$ to denote the two children of $u$. We have that
$\ell(u)=\ell(c_1(u))+\ell(c_2(u))$ and thus 
$\delta(u)=\lvert \ell(c_1(u))-\ell(c_2(u))\rvert =\ell(u)-2\min(\ell(c_1(u)), \ell(c_2(u))$. From this, it follows that for each tree shape $T$, 
$D(T)=S(T)-C(T)=2\sum_{u\in V_0(T)} \min(\ell(c_1(u)), \ell(c_2(u)).$

%{\color{blue} For our analysis, we will consider the difference between SI and CI for all the tree shapes with $n$ leaves.} %Thus, for a tree shape $T$, we define 
%$D(T)=\frac{1}{2}(S(T)-C(T))= \sum_{u\in V_0(T)} \min(\ell(c_1(u)), \ell(c_2(u))$ and 
Defining
$$D_n=\frac{1}{2} \sum_{T\in {\cal T}(n)} D(T),$$  we obtain:
\begin{equation}\label{Colless-Sackin}
 C_n=\sum_{T\in{\cal T}(n)} C(T)=S_n- 2D_n.
\end{equation}
In addition, we have the following recurrence formula: 

\begin{align}
D_n &=\sum_{1\leq k <  n/2} \left( 
\sum_{T\in {\cal T}(k)}\sum_{T'\in {\cal T}(n-k)} \left(D(T)+D(T')+k\right)\right) \nonumber \\
&= \sum_{1\leq k < n/2}k b_k b_{n-k}
+ \sum_{1\leq k< n/2}\left( b_{n-k}D_k +
 b_k D_{n-k}\right)  \nonumber \\
 &= \sum_{1\leq k \leq  n/2}kb_kb_{n-k}
+ \sum_{1\leq  k <n}  D_kb_{n-k}, 
\;\; \mbox{for odd $n$}\nonumber
\end{align}
and 
\begin{align}
D_n &=\sum_{1\leq k <  n/2} \left( 
\sum_{T\in {\cal T}(k)}\sum_{T'\in {\cal T}(n-k)} (D(T)+D(T')+k\right)  \nonumber \\
&  \quad + 
\sum_{T, T'\in {\cal T}(n/2): T\neq T'} [D(T)+D(T')+n/2] +\sum_{T\in {\cal S}(n/2)} [2D(T)+n/2] \nonumber  \\
&= \sum_{1\leq k < n/2}k b_kb_{n-k}
+ \sum_{1\leq k< n/2}\left(b_{n-k} D_k +
 b_{k} D_{n-k}\right)  \nonumber \\
 &\quad + \left(\sum_{T\in {\cal T}(n/2)}(b_{n/2}-1)D(T)\right) + {b_{n/2}\choose 2}\frac{n}{2} + 2D_{n/2}+\frac{n}{2} b_{n/2}  \nonumber \\
 %&=& \sum_{1\leq k < n/2}k b_kb_{n-k}
%+ \sum_{1\leq k< n/2}\left(b_{n-k} D_k +
% b_{k} D_{n-k}\right\}  \nonumber \\
% &&  + (b_{n/2}+1)D_{n/2}+ \frac{nb_{n/2} (b_{n/2}+1)}{4} \nonumber \\
 % &=& \frac{n}{2} b_n - \frac{1}{2} \sum_{1\leq k \leq  n/2} (n-2k) b_{k}b_{n-k} 
 &= \sum_{1\leq k \leq  n/2}k b_kb_{n-k} + \sum_{1\leq k <n} D_kb_{n-k}   -\frac{n}{2}{b_{n/2}\choose 2}
+ D_{n/2},
\;\; \mbox{for even $n$.}\nonumber
\end{align}

 We first need a technical lemma for:
\[
F_n:=\sum_{1\leq k\leq n/2}kb_kb_{n-k}+\begin{cases} 0,&\text{if $n$ is odd;}\\ {\displaystyle-\frac{n}{2}\binom{b_{n/2}}{2}},&\text{if $n$ is even}.\end{cases}
\]

%For $T_n$, we  first have the following lemma.

\begin{lemma}
We have
%\[
$F_n={\mathcal O}\left(n^{-1}\rho^{-n}\right).
%\]
$
\end{lemma}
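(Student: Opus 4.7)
The plan is to extract a uniform upper bound of the form $b_k \leq C\,k^{-3/2}\rho^{-k}$ valid for every $k\geq 1$ from the sharp asymptotic in Eqn. (\ref{exp-Bn}), and then to bound $F_n$ termwise. Such a uniform bound is essentially free: the asymptotic tells us $b_k k^{3/2}\rho^k \to \lambda/(2\sqrt{\pi})$ as $k\to\infty$, so the ratio is bounded for all $k\geq k_0$, and the finitely many smaller values of $k$ can be absorbed into a single constant $C$. This is enough to avoid any generating-function gymnastics, which would be awkward anyway because the weight $\min(k,n-k)$ implicit in $F_n$ does not translate cleanly into an algebraic operation on $B(z)$.

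The first step of the estimate is to handle the even-$n$ middle term. For even $n$, the summand at $k=n/2$ equals $\tfrac{n}{2}b_{n/2}^2$, and combining it with the correction $-\tfrac{n}{2}\binom{b_{n/2}}{2}$ produces $\tfrac{n}{4}\bigl(b_{n/2}^2+b_{n/2}\bigr)$. The uniform bound gives $\tfrac{n}{4}b_{n/2}^2 = O(n^{-2}\rho^{-n})$ and $\tfrac{n}{4}b_{n/2} = O(n^{-1/2}\rho^{-n/2})$. Since $0<\rho<1$, the factor $\rho^{-n/2}$ is exponentially smaller than $\rho^{-n}$, so both pieces are $O(n^{-1}\rho^{-n})$.

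It then remains to bound the off-middle sum $M_n := \sum_{1\leq k<n/2} k\,b_k b_{n-k}$. Applying the uniform bound to both factors,
\[
M_n \;\leq\; C^2\,\rho^{-n}\sum_{k=1}^{\lfloor n/2\rfloor} k^{-1/2}(n-k)^{-3/2}.
\]
For $k\leq n/2$ one has $(n-k)^{-3/2}\leq (n/2)^{-3/2}$, and the elementary estimate $\sum_{k=1}^{\lfloor n/2\rfloor}k^{-1/2} = O(n^{1/2})$ then yields $M_n = O\bigl(\rho^{-n}\cdot n^{-3/2}\cdot n^{1/2}\bigr) = O(n^{-1}\rho^{-n})$. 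Combining with the previous paragraph gives $F_n = O(n^{-1}\rho^{-n})$ in both parities. No serious obstacle arises; the only mild subtlety is making sure the even-$n$ middle contribution (where $b_{n/2}$ appears to the first power) is genuinely negligible, which works out because $\rho^{n/2}\to 0$.
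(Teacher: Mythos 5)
Your proof is correct and follows essentially the same route as the paper: insert the bound $b_k=\mathcal{O}\left(k^{-3/2}\rho^{-k}\right)$ termwise and estimate the resulting sum $\sum_{k\le n/2}k^{-1/2}(n-k)^{-3/2}=\mathcal{O}\left(n^{-1}\right)$, with the even-$n$ middle term contributing only lower-order pieces. The only differences are cosmetic: you bound the sum by $(n-k)^{-3/2}\le (n/2)^{-3/2}$ plus $\sum k^{-1/2}=\mathcal{O}(\sqrt{n})$ where the paper compares with the integral $\int_0^{1/2}x^{-1/2}(1-x)^{-3/2}\,{\rm d}x$, and you spell out the uniformity of the bound on $b_k$ and the $k=n/2$ bookkeeping a bit more explicitly.
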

\begin{proof}
By using Eqn. (\ref{exp-Bn}),
\begin{align*}
F_n&= {\mathcal O}\left(\rho^{-n}\sum_{1\leq k\leq n/2}k^{-1/2}(n-k)^{-3/2}+n^{-2}\rho^{-n}\right)\\
&={\mathcal O}\left(n^{-1}\rho^{-n}\int_{0}^{1/2}x^{-1/2}(1-x)^{-3/2}{\rm d}x+n^{-2}\rho^{-n}\right)\\
&={\mathcal O}\left(n^{-1}\rho^{-n}+n^{-2}\rho^{-n}\right)={\mathcal O}\left(n^{-1}\rho^{-n}\right),
\end{align*}
where in the second step, we approximated the sum by an integral.
\end{proof}

Now, define:
\begin{equation}\label{rec-tildeDn}
\tilde{D}_n=Kn^{-1}\rho^{-n}+\sum_{1\leq k<n}\tilde{D}_kb_{n-k}+\begin{cases}0,&\text{for $n$ is odd;}\\ \tilde{D}_{n/2},&\text{for $n$ is even,}\end{cases}
\end{equation}
where $K$ is the implied ${\mathcal O}$-constant from the last lemma. The reason for considering this sequence is that it (a) majorizes $D_n$, namely, $D_n\leq\tilde{D}_n$ (which is easily proved by induction) and (b) its asymptotics can derived with similar tools as used in the proof of Theorem~\ref{asym-mean-S}.

\begin{lemma}
\label{est-Cn2}
We have,
\[
\tilde{D}_n\sim\frac{K}{\lambda\sqrt{\pi}}n^{-1/2}(\log n)\rho^{-n},\qquad (n\rightarrow\infty).
\]
Consequently, $D_n={\mathcal O}\left(n^{-1/2}(\log n)\rho^{-n}\right)$.
\end{lemma}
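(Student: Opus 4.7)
The plan is to transplant the singularity-analysis argument from the proof of Theorem~\ref{res-Sackin} to the recurrence in Eqn.~(\ref{rec-tildeDn}); the only new feature is that the inhomogeneous term now produces a logarithmic factor in the singular expansion. Set $\tilde{D}(z)=\sum_{n\geq 1}\tilde{D}_n z^n$. The sequence $Kn^{-1}\rho^{-n}$ has generating function $-K\log(1-z/\rho)$, the convolution $\sum_{1\leq k<n}\tilde{D}_k b_{n-k}$ contributes $\tilde{D}(z)B(z)$, and the even-index correction $\tilde{D}_{n/2}$ contributes $\tilde{D}(z^2)$. Hence Eqn.~(\ref{rec-tildeDn}) translates into
\[
\tilde{D}(z)=-K\log(1-z/\rho)+\tilde{D}(z)B(z)+\tilde{D}(z^2),
\]
which solves to
\[
\tilde{D}(z)=\frac{-K\log(1-z/\rho)+\tilde{D}(z^2)}{1-B(z)}.
\]

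Next, I would examine this expression near the dominant singularity at $z=\rho$ in a $\Delta$-domain. Since $\rho<1$, the function $\tilde{D}(z^2)$ is analytic at $z=\rho$ and is therefore subdominant relative to the logarithmic term; and by Eqn.~(\ref{sing-exp-Bz}), $1-B(z)=\lambda\sqrt{1-z/\rho}+\mathcal{O}(1-z/\rho)$. Appealing to the closure properties of singularity analysis noted in Remark~\ref{clos-prop}, one obtains that $\tilde{D}(z)$ is $\Delta$-analytic and satisfies
\[
\tilde{D}(z)\sim\frac{-K\log(1-z/\rho)}{\lambda\sqrt{1-z/\rho}},\qquad z\to\rho.
\]

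Finally, I would invoke the log-power transfer theorem (Theorem~VI.2 in \cite{Flajolet09Book}), which yields
\[
[z^n]\frac{\log(1/(1-z/\rho))}{\sqrt{1-z/\rho}}\sim\frac{n^{-1/2}\log n}{\Gamma(1/2)}\rho^{-n}=\frac{\log n}{\sqrt{\pi n}}\rho^{-n},
\]
so that
\[
\tilde{D}_n\sim\frac{K}{\lambda\sqrt{\pi}}n^{-1/2}(\log n)\rho^{-n},
\]
as claimed. The bound $D_n=\mathcal{O}\!\left(n^{-1/2}(\log n)\rho^{-n}\right)$ then follows immediately from the stated majorization $D_n\leq\tilde{D}_n$. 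The step requiring the most care is the use of the log-power transfer theorem, since the version invoked in Theorem~\ref{res-Sackin} treats only a pure power singularity; once one appeals to the logarithmic extension in \cite{Flajolet09Book}, checking that the error terms in the expansion of $1-B(z)$ and the analytic contribution of $\tilde{D}(z^2)$ remain subdominant against $\log(1-z/\rho)/\sqrt{1-z/\rho}$ is routine.
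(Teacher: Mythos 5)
Your proposal is correct and follows essentially the same route as the paper: translate Eqn.~(\ref{rec-tildeDn}) into $\tilde{D}(z)=K\log\frac{1}{1-z/\rho}+\tilde{D}(z)B(z)+\tilde{D}(z^2)$, solve for $\tilde{D}(z)$, use the expansion of $1-B(z)$ and the analyticity of $\tilde{D}(z^2)$ at $z=\rho$ to get $\tilde{D}(z)\sim\frac{K}{\lambda}\log\frac{1}{1-z/\rho}\,(1-z/\rho)^{-1/2}$, and transfer (the paper cites the log-power version of Theorem VI.3 in \cite{Flajolet09Book}, which also handles logarithmic factors). The final deduction for $D_n$ via the majorization $D_n\leq\tilde{D}_n$ matches the paper as well.
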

\begin{proof}
Let $\tilde{D}(z)=\sum_{i}\tilde{D}_iz^i$ be the generating function of $\tilde{D}_n$. Then, the recurrence in Eqn. (\ref{rec-tildeDn}) translates into
\[
\tilde{D}(z)=K\log\frac{1}{1-z/\rho}+\tilde{D}(z)B(z)+\tilde{D}(z^2)
\]
since
\[
\sum_{n\geq 1}Kn^{-1}\rho^{-n}z^n=K\log\frac{1}{1-z/\rho}
\]
and the rest of terms are explained as in the derivation of Eqn. (\ref{eq-S}). Solving for $D(z)$ gives:
\[
\tilde{D}(z)=\frac{\displaystyle K\log\frac{1}{1-z/\rho}+\tilde{D}(z^2)}{1-B(z)}.
\]
Thus, from Eqn. (\ref{sing-exp-Bz}), $\tilde{D}(z)$ satisfies as $z\rightarrow\rho$ in a $\Delta$-domain:
\[
\tilde{D}(z)\sim\frac{\displaystyle K\log\frac{1}{1-z/\rho}+\tilde{D}(\rho^2)}{\lambda\sqrt{1-z/\rho}+{\mathcal O}(1-z/\rho)}\sim\frac{K}{\lambda}\cdot\frac{\displaystyle \log\frac{1}{1-z/\rho}}{\sqrt{1-z/\rho}}
\]
from which the claimed result follows by the transfer theorems (which also work with $\log$-factors; see Theorem VI.3 in \cite{Flajolet09Book}).
\end{proof}

Now from Eqn. (\ref{asymp-Sn}), Eqn. (\ref{Colless-Sackin}) and Lemma~\ref{est-Cn2}, we have the following result.

\begin{theorem}\label{asym-mean-S}
Under the uniform model, the expected Colless index of a tree shape with $n$ leaves, $\mbox{\rm ECI}_{sh}(n)$, is asymptotic to  $\pi^{1/2}\lambda^{-1}n^{3/2}$.
\end{theorem}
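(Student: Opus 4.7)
The plan is to leverage the identity $C_n = S_n - 2D_n$ from Eqn. (\ref{Colless-Sackin}), so that the asymptotics of $\mbox{ECI}_{sh}(n)=C_n/b_n$ reduce to those of $S_n/b_n$ (already handled in Theorem~\ref{res-Sackin}) plus a correction term $2D_n/b_n$ that I expect to be of strictly smaller order.

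Concretely, I would first divide Eqn. (\ref{Colless-Sackin}) by $b_n$ to obtain
\[
\mbox{ECI}_{sh}(n)=\frac{S_n}{b_n}-\frac{2D_n}{b_n}.
\]
By Theorem~\ref{res-Sackin}, the first term is asymptotic to $\sqrt{\pi}\lambda^{-1}n^{3/2}$. For the second, Lemma~\ref{est-Cn2} gives $D_n={\mathcal O}(n^{-1/2}(\log n)\rho^{-n})$, while Eqn. (\ref{exp-Bn}) gives $b_n\sim\lambda n^{-3/2}\rho^{-n}/(2\sqrt{\pi})$. Dividing, the $\rho^{-n}$ factors cancel and $2D_n/b_n={\mathcal O}(n\log n)=o(n^{3/2})$, so this correction is negligible beside the main term and the claimed asymptotic follows.

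The genuine difficulty was already absorbed into Lemma~\ref{est-Cn2}: the key point is that $D_n$ has a strictly smaller polynomial order than $S_n$ (both scaled by $\rho^{-n}$), which in turn required introducing the majorizing sequence $\tilde{D}_n$, translating its recurrence into a functional equation for $\tilde{D}(z)$, and running singularity analysis on a generating function with a logarithmic source term. Once that bound is available, the theorem itself is essentially bookkeeping; it is worth noting that the Colless index ends up sharing the very same leading constant $\sqrt{\pi}\lambda^{-1}$ as the Sackin index, precisely because the balance correction $2D_n$ is asymptotically dwarfed by the node-depth contribution.
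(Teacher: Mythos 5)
Your proposal is correct and follows essentially the same route as the paper, which likewise deduces the theorem from the identity $C_n=S_n-2D_n$ in Eqn. (\ref{Colless-Sackin}), the Sackin asymptotics in Eqn. (\ref{asymp-Sn}), and the bound $D_n={\mathcal O}(n^{-1/2}(\log n)\rho^{-n})$ of Lemma~\ref{est-Cn2}. Your explicit computation that $2D_n/b_n={\mathcal O}(n\log n)=o(n^{3/2})$ is exactly the bookkeeping the paper leaves implicit.
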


\begin{figure}
\centering 
\includegraphics[scale=0.50]{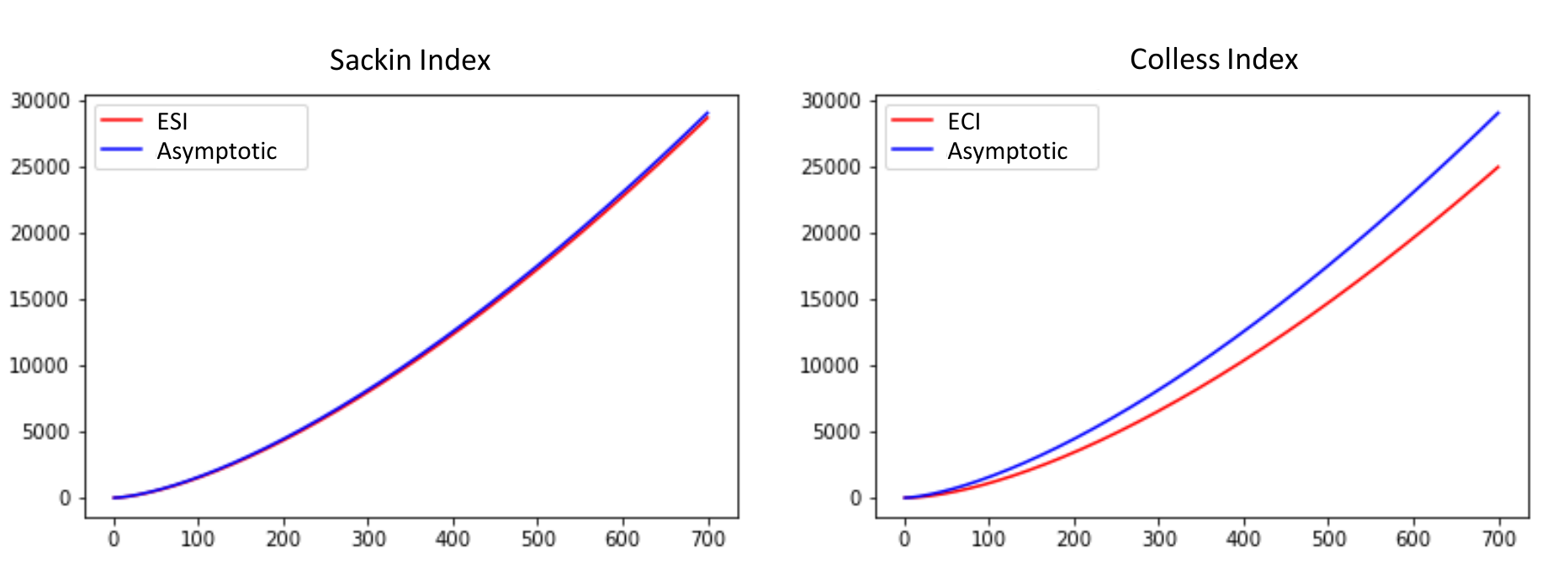}
\caption{The exact and asymptotic values of the expected Sackin (left) and Colless (right) indices.}
\label{Fig_Comparison}
\end{figure}

\subsection{Visualization on the asymptotic analyses} 

The exact and asymptotic values of 
$\mbox{ESI}_{sh}(n)$ and $\mbox{ECI}_{sh}(n)$ were computed and compared for $n$ up to 700 (Figure~\ref{Fig_Comparison}). The comparison indicates that the asymptotic value $\sqrt{\pi}\lambda^{-1}n^{3/2}$ is a very good approximation to the Sackin index even for a small number $n$. However,  the asymptotic value overestimates the Colless index with a relatively large margin. The large margin is due to the fact that $\mbox{ESI}_{sh}(n)-
\mbox{ECI}_{sh}(n)$ is of the order $n\log n$ according to our proof; however, the relative error will tend to $0$ with a speed of at least $\log n/\sqrt{n}$. 

\section{The expected Sackin index for phylogenetic trees} 

Mir et al. discovered the following simple closed formula for the expected Sackin index for a phylogenetic tree under the uniform model.  

\begin{theorem}
\label{Thm1}
(\cite{Mir13Math}) For any $n$, $\mbox{\rm ESI}_{p}(n)=\frac{4^{n-1}n!(n-1)!}{(2n-2)!}-n$.
\end{theorem}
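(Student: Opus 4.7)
The plan is to derive a first-order linear recurrence for $\mbox{ESI}_p(n)$ from the edge-insertion bijection alluded to in the introduction, and then to confirm by induction that the claimed closed form satisfies it. The cleanest setup uses the identity $S(T)=\sum_{x\,\text{leaf}}d_T(x)$ together with the symmetry of the uniform model under relabelling of taxa: this gives $\mbox{ESI}_p(n)=n\,\mu_n$, where $\mu_n$ denotes the expected depth of a fixed leaf (say the one labelled $1$, call it $x_1$) in a uniform random $T\in\mathcal{P}(n)$. Working with $\mu_n$ rather than $\mbox{ESI}_p(n)$ keeps the book-keeping minimal.

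The recurrence then comes from the bijection between $\mathcal{P}(n)$ and pairs $(T,e)$ with $T\in\mathcal{P}(n-1)$ and $e$ an attachment point for the new leaf, where an attachment point is either one of the $2n-4$ edges of $T$ or a ``pseudo-edge'' above the root; altogether there are $2n-3$ of them, matching the identity $a_n=(2n-3)a_{n-1}$. Because the bijection carries uniform to uniform, one can view a uniform tree on $n$ taxa as the result of picking $T$ uniformly from $\mathcal{P}(n-1)$ and inserting taxon $n$ at a uniformly chosen attachment point. The key observation is that such an insertion increases $d_T(x_1)$ by exactly $1$ when $e$ lies on the path from the root to $x_1$ or equals the pseudo-edge, and leaves $d_T(x_1)$ unchanged otherwise; hence the number of favourable attachment points is $d_T(x_1)+1$. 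Taking conditional expectations and then averaging over $T$ yields
\[
\mu_n=\mu_{n-1}+\frac{\mu_{n-1}+1}{2n-3},
\]
which, after multiplying through by $n$ and using $\mbox{ESI}_p(n)=n\mu_n$, becomes the clean recurrence
\[
(2n-3)\,\mbox{ESI}_p(n)=2n\,\mbox{ESI}_p(n-1)+n,\qquad n\geq 2,
\]
with initial value $\mbox{ESI}_p(1)=0$.

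To finish, I would check by induction that $E_n:=\frac{4^{n-1}n!(n-1)!}{(2n-2)!}-n$ solves this recurrence. Writing $E_n=f_n-n$ with $f_n:=\frac{4^{n-1}n!(n-1)!}{(2n-2)!}$ and substituting into the recurrence, the affine terms in $n$ cancel exactly, and what remains reduces to the single algebraic identity $(2n-3)\,f_n=2n\,f_{n-1}$, i.e.\ $f_n/f_{n-1}=2n/(2n-3)$, which is immediate from the product form of $f_n$. The main conceptual point, and the only place where one can easily miscount, is the middle step: one must treat the slot above the root as a genuine attachment point, both to make the insertion map a bijection onto $\mathcal{P}(n)$ and to obtain the clean count $d_T(x_1)+1$ of favourable positions. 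Everything else is routine algebra, delivering the elementary proof sought in the introduction.
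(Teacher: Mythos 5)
Your proposal is correct, and it reaches the paper's recurrence by a genuinely different decomposition. The paper also uses the leaf-insertion enumeration (a tree on $n$ taxa has $2n-1$ attachment points including the open edge above the root, so $a_{n+1}=(2n-1)a_n$), but it works directly with the total $S^{(p)}_n=\sum_{P\in\mathcal{P}(n)}S(P)$: for each $P$ it computes $\sum_{Q\in\mathcal{A}(P)}S(Q)=2(n+1)S(P)+(n+1)$ by tracking the contribution of the new internal node (which contributes $1+\ell_P(u)$ when the insertion is on the edge entering $u$) and the fact that $\ell_P(w)$ increases by $1$ for exactly the $2\ell_P(w)-2$ insertions below $w$; adding $(n+1)a_{n+1}$ to both sides yields $S^{(p)}_{n+1}+(n+1)a_{n+1}=2(n+1)\bigl(S^{(p)}_n+na_n\bigr)$, which solves to $S^{(p)}_n=2^{n-1}n!-na_n$. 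You instead use the depth-sum characterization $S(T)=\sum_{x\ \text{leaf}}d_T(x)$ together with exchangeability of labels to reduce everything to the expected depth $\mu_n$ of one marked leaf, and then observe that an insertion raises that depth precisely when the attachment point is one of the $d_T(x_1)+1$ positions on the root-to-leaf path (counting the pseudo-edge above the root). This avoids both the bookkeeping for the new internal node and the count of edges below a node, at the cost of invoking the depth-sum identity and the symmetry argument; the paper's purely enumerative route needs no probabilistic reduction and delivers the exact total $S^{(p)}_n$ along the way. After clearing denominators your recurrence $(2n-3)\,\mbox{ESI}_p(n)=2n\,\mbox{ESI}_p(n-1)+n$ is exactly the paper's Eqn.~(\ref{zzz}) rewritten in terms of expectations, and your verification that $f_n/f_{n-1}=2n/(2n-3)$ for $f_n=\frac{4^{n-1}n!(n-1)!}{(2n-2)!}$ correctly completes the induction, so the argument is sound as stated.
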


An alternative proof was presented in \cite{King21Math} recently. Here,  we will present a short elementary proof using the following enumeration of phylogenetic trees (see \cite{Fel04Book} for example):
\begin{quote}
Assume that there is an open edge entering the root of each phylogenetic tree.   ${\cal P}(n+1)$
can be obtained from ${\cal P}(n)$ by attaching Leaf $n+1$ on each of the $2n-1$ edges of every tree of ${\cal P}(n)$ (Figure~\ref{Fig1_attach}.A). 
\end{quote}

\begin{figure}
\centering 
\includegraphics[scale=0.50]{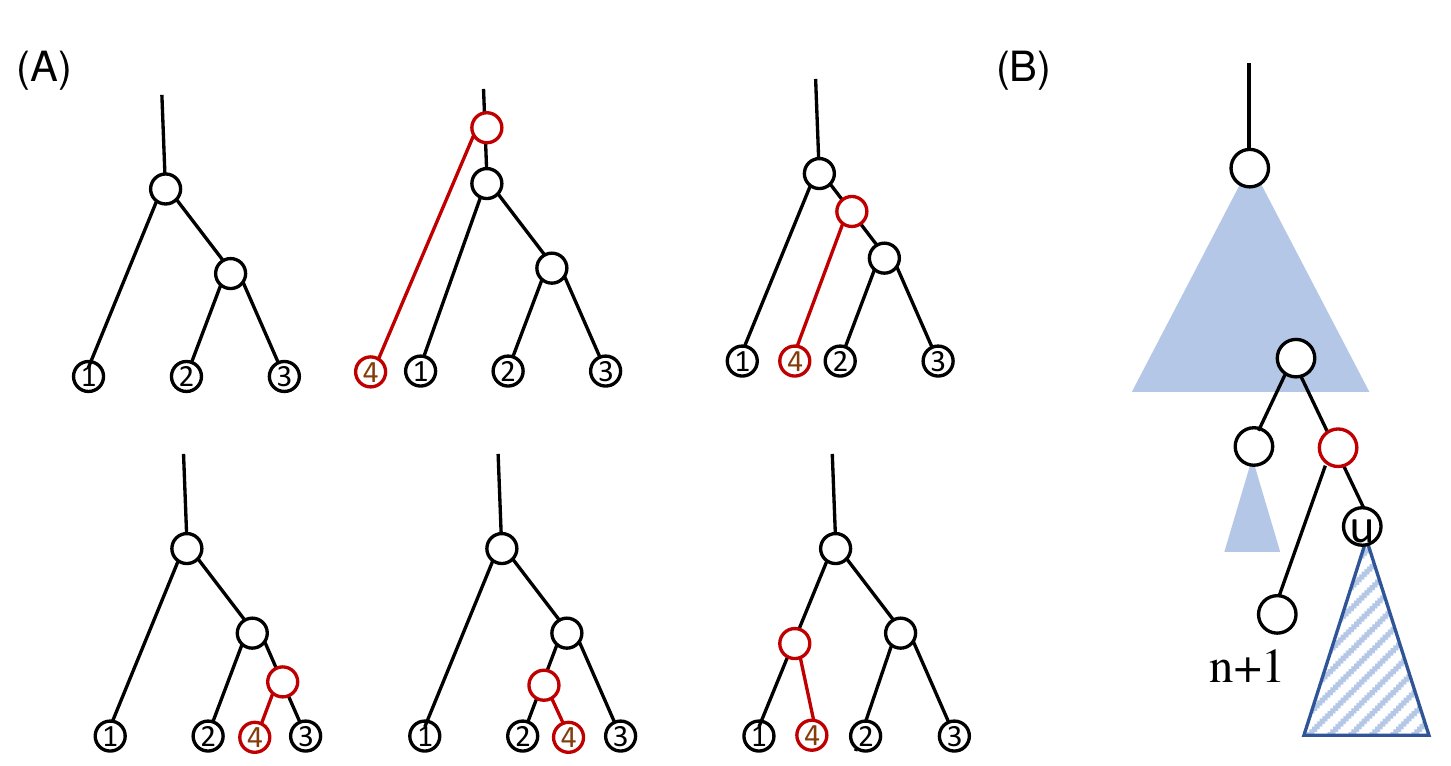}
\caption{(A) Illustration of the process of generating phylogenetic trees on $n+1$ taxa through inserting Leaf $n+1$ in each edge of a
phylogenetic tree on taxa $\{1, 2, \cdots, n\}$ for $n=3$. (B)
After Leaf $n+1$ is attached onto the edge entering the node $u$, the number of leaves below
the parent of $n+1$ in the obtained tree $Q$ is equal to 1 plus that of $u$ in the original tree $P$. }
\label{Fig1_attach}
\end{figure}

Let $S^{(p)}_{n}=\sum_{P\in {\cal P}(n)} S(P)$. Note that 
$S^{(p)}_{n}=\mbox{ESI}_p(n)\times a_n$, where $a_n=\vert{\cal P}(n)\vert$. For each $P\in {\cal P}(n)$, we use ${\cal A}(P)$ to denote the set of $2n-1$
phylogenetic trees on $n+1$ taxa that are obtained from $P$ by attaching Leaf $n+1$ on each of the $2n-1$ tree edges of $P$.
Then,
\begin{equation}
   \label{last_recurrence}
    S^{(p)}_{n+1}=\sum_{P\in {\cal P}(n)}
    \sum_{Q\in {\cal A}(P)} S(Q).
\end{equation}

Consider a  tree $Q\in {\cal A}(P)$. 
Note that Leaf $n+1$ and its parent are the only nodes of $Q$ that are not found in $P$.  Assume that $Q$ is obtained by attaching Leaf $n+1$ to the edge $e$ that enters $u$ in $P$. The number of leaves below  the parent of Leaf $n+1$  is $1+\ell_P(u)$ in $Q$ 
(Figure~\ref{Fig1_attach}.B).
Therefore, the amount contributed by the parents of Leaf $n+1$ to the sum $\sum_{Q\in {\cal L}(P)}S(Q)$ is:
\begin{align}
\sum_{u \in V(P)}(1+\ell_P(u))&= (2n-1)+\sum_{u \in V(P)} \ell_P(u) \nonumber \\ 
&= (2n-1)+n+\sum_{u \in V_0(P)}\ell_P(u)  \nonumber \\
&= 3n-1+S(P),
\end{align}
where the $n$ in the second expression is the sum of $\ell_P(u)$ (which is $1$) over all the $n$ leaves $u$ in $P$.

For $w\in V(P)$, we have either  $\ell_P(w) = \ell_Q(w)$ or 
$\ell_P(w) = \ell_Q(w)+1$. Furthermore, the latter holds if and only if $Q$ is obtained by attaching Leaf $n+1$ to an edge below $w$ in $P$. Since there are $2\ell_P(w)-2$ edges below $w$ in $P$,  thus $\ell_Q(w)= \ell_P(w)+1$ for exactly 
$2\ell_P(w)-2$ trees $Q$ of ${\cal A}(P)$. 
Therefore, 
\begin{align*}
  \sum_{Q\in {\cal A}(P)} S(Q) &= (2n-1)S(P)
   +[S(P)+(3n-1)] + \sum_{w\in V_0(P)} 
   (2\ell_P(w)-2)\\
 &= 2nS(P) + (3n-1) + 2S(P) -2\lvert V_0(P)\rvert \\
 &= 2(n+1)S(P) + (n+1).
 \end{align*}
 Adding $n+1$ to each term in the left-hand side of the above equality, which can be considered as the contribution of the $n+1$ leaves, we further have:
 \begin{align*}
  \sum_{Q\in {\cal L}(P)}\left(S(Q)+(n+1)\right) &=
 2(n+1)S(P) + (n+1) + (2n-1)(n+1)\\ 
% &=& 2(n+1)S^{(p)}(T) + 2n(n+1)\\
 &= 2(n+1)\left(S(P)+n\right).
 \end{align*}
By Eqn.~(\ref{last_recurrence}), we obtain the following simple recurrence formula: 
\begin{align}
 S^{(p)}_{n+1}+(n+1)a_{n+1} &=
 \sum_{P\in {\cal P}(n)}
\sum_{Q\in {\cal L}(P)}\left(S(Q)+(n+1)\right) \nonumber  \\
&=  \sum_{P\in {\cal P}(n)} 2(n+1)\left(S(P)+n\right) \nonumber \\
 &= 2(n+1) \left(
 S^{(p)}_n+na_{n} \right). \label{zzz}
\end{align}
Since $S^{(p)}_2=2$ and $a_2=1$, Eqn.~(\ref{zzz}) implies that 
$S^{(p)}_n=2^{n-1}n!-na_n$ and 
$$
\mbox{ESI}_p(n)=\frac{S^{(p)}_n}{a_n} = \frac{4^{n-1}n!(n-1)!}{(2n-2)!}-n
$$
Theorem~\ref{Thm1} is proved.

\section{Conclusion}
In this short paper, we contributed two results to the study of the Sackin and Colless indices.
We have proved that the asymptotic value of Sackin and Colless indices are the same for tree shapes under the uniform model. Note that this is expected since tree shapes under the uniform model are known to behave similar to phylogenetic trees under the uniform model; see the discussion in the introduction of \cite{Broutin12RSA}. In particular, the average height of phylogenetic trees and binary tree shapes with $n$ leaves are both asymptotically equal to $2\lambda^{-1}\sqrt{\pi n}$ 
(see \cite{Flajolet82JCSS} and \cite{Broutin12RSA}).

We also presented a short elementary proof of the closed formula for the expected Sackin index of phylogenetic trees under the uniform model. 
The proof is based on a  tree enumeration approach that is  different from one used in  \cite{Mir13Math} and \cite{King21Math}. This technique  was also used by Goh \cite{Goh_Thesis} to derive  a short proof of the closed formula for the expected total cophenetic index of a phylogenetic tree under the uniform model that was introduced in \cite{Mir13Math} (see also \cite{Fisher21Survey}).
It is an interesting problem  whether or not the proof technique in Section 4  can be used to investigate other tree balance indices (such as those given in the survey paper \cite{Fisher21Survey}). 
\section*{CRediT authorship contribution statement}
G. Goh: Recurrence formulas;
L. Zhang:  Recurrence formulas, writing;
M. Fuchs: Asymptotic analysis, writing.

\section*{Declaration of competing interest}
The authors declare that they have no known competing financial
interests or personal relationships that could have appeared to
influence the work reported in this paper.

\section*{Acknowledgments}
The authors thanks the two anonymous reviewers for useful suggestions and comments for preparing the final version of this paper. 
LZ was supported by MOE Tier 1 grant R-146-000-318-114; MF was supported by MOST-109-2115-M-004-003-MY2.

\bibliography{bibliography}

\end{document}